\documentclass{llncs}

\usepackage{amssymb}
\usepackage{amsmath}
\usepackage{graphicx}

\usepackage{color}
\usepackage[usenames,dvipsnames,svgnames,table]{xcolor}

\usepackage{bbm}

\usepackage[OT4]{fontenc}
\usepackage[utf8x]{inputenc}

\usepackage[normalem]{ulem} 
\usepackage{soul} 

\usepackage{listings}

\newcommand{\mbraket}[2]{\langle #1 | #2 \rangle}

\newcommand{\mtr}[1]{\mathrm{Tr}\left( #1 \right)}

\newcommand{\myiff}{{\bf iff }}

\newcommand{\bC}{\mathbb{C}}

\newcommand{\nH}{\mathcal{H}}
\newcommand{\lH}{\mathfrak{h}}

\newcommand{\nU}{\mathcal{U}}

\newcommand{\keywords}[1]{\par\addvspace\baselineskip\noindent\keywordname\enspace\ignorespaces#1}

\begin{document}


\mainmatter

\title{Maximally entangled states of $(d,\infty)$ quantum systems: some numerical studies}



\authorrunning{Roman Gielerak \and Marek Sawerwain}
\tocauthor{Roman Gielerak and Marek Sawerwain}

\author{Roman Gielerak\inst{1} \and Marek Sawerwain \inst{1}}
\institute{Institute of Control \& Computation Engineering \\
University of Zielona G\'ora, Licealna 9, Zielona G\'ora 65-417, Poland \\
\email{R.Gielerak@issi.uz.zgora.pl}
\and
Institute of Control \& Computation Engineering \\
University of Zielona G\'ora, Licealna 9, Zielona G\'ora 65-417, Poland \\
\email{M.Sawerwain@issi.uz.zgora.pl}
}


\maketitle

\begin{abstract}
Gram matrix approach to an entanglement analysis of pure states describing  $(d,\infty)$ -- quantum systems is being introduced. In~particular, maximally entangled states are described as those having a special forms of the corresponding Gram matrices.
\keywords{quantum systems, entanglement, numerical computations}
\end{abstract}

\section{Introduction} \label{lbl:sec:introduction:RG:MS:CN:2019}

One of the most important, genuine quantum resource in quantum information engineering is by no doubt, the phenomenon of quantum entanglement of the quantum states being processed \cite{Bengtsson2006}, \cite{Guhne2008}, \cite{Horodecki2009}. The known and implemented quantum communication protocols, including the best known QKD \cite{Horodecki2010} and teleportation protocols, are heavily based on the entanglement present in the corresponding quantum states used in a particular implementation of the protocol performed. This is the main reason why the mathematical analysis, together with the corresponding engineering of quantum entanglement is so important for future developments of quantum realistic technologies \cite{Castelvecchi2017}. 

There are many excellent sources on the mathematical foundations of the quantum information processing \cite{Nielsen2000} to which we refer for details.

Concerning the existing quantum technologies the major problem is to find quantum systems in which genuine quantum behaviour can survive on a sufficient long time interval due to intense, unavoidable decoherence processes destroying them. There are many proposals, together with preliminary technological implementations for the construction of the quantum computer, see i.e. \cite{DiVincenzo2000}. At present  days  many  of world leading  technological  companies, like IBM \cite{IBMQ}, Google \cite{Google}, Rigetti \cite{Rigetti},  D-Wave \cite{DWave}  and  others started  seriously to develop  full scale quantum computers hardware and software \cite{Intel}, \cite{Qiskit}, \cite{Microsoft}. The  use  of  superconducting qubits \cite{Devoret2004} together with Josephson junctions \cite{PhysRevLett.111.080502} properties in performing universal quantum  computations seems to be extremely difficult challenge from the engineering point of view. And this is why any alternatives to the proposed in \cite{IBMQ} quantum computer hardware technologies are so important \cite{Linke2017}.

The class of quantum systems which can be seen as consisting of two entangled parts, one being finite dimensional (spin-degrees of freedom) and the second infinite dimensional, the so called quantum $(d,\infty)$ -- systems seems to be a very optimistic candidates for the realistic use in quantum processing information future technological implementations \cite{Gielerak2017}. Also in fundamental atomic and molecular quantum physics the quantum entanglement,  in between  discrete degrees of  freedom that given by continuous variables and called there spin-orbit entanglement is under intensive studies presently \cite{WitczakKrempa}, \cite{Karimi}.

The paper is organized as follows: in Sec.~\ref{lbl:sec:gram:matrix}  the  notion  of Gram matrix connected  to  a  given  pure  state  of  the  $(d, \infty)$  system  under   consideration  will  be  explained  and  some basic  properties of  it are  listed. Also our main  hypothesis (stating that the maximal entanglement  of   the  pure state  under  consideration  is manifested  in the special  form  of  the  corresponding  Gram matrix) is  formulated  there. Section~\ref{lbl:sec:title:RG:MS:CN:2019}  contains  some numerical  results in favour of the validity of our main  hypothesis.

\section{The Gram matrix approach to the entanglement of pure states of $(d,\infty)$ class of systems}
\label{lbl:sec:gram:matrix}

Let $d$ be a positive integer, $d \geq 2$ and let $\mathbb{C}^d$ be complex d-dimensional Euclid space. As it is well known \cite{Nielsen2000} pure states of any quantum system are given by unit vectors (by rays -- more precisely) in some separable Hilbert space, to be denoted as $\nH$. A composite two-partite quantum systems composed from the finite dimensional part, described by $\mathbb{C}^d$ and the infinite-dimensional (in general everything below applies to the finite dimensional systems as well) ones is described by the Hilbert space $\nH = \mathbb{C}^d \otimes \lH$, where $\lH$ stands for the appropriate, infinite  dimensional in  general, separable  Hilbert  space  and operator $\otimes$ stands for tensor product.

Let $(e_i)_{i=1:d}$ be the system of unit vectors in $\mathbb{C}^d$ defined as $(e_i)_j = \delta_{ij}$ for $i,j=1:d$. We denote as $(c)ons(\lH)$ the set of all (complete) orthonormal systems $(f_i)_{i=1,...}$ of vectors from $\lH$. For a given $(f_i) \in cons(\lH)$ and any $\Psi \in \nH$ we can write
\begin{equation}
\Psi = \sum_{i=1:d, j=1,...} c_{ij} e_i \otimes f_j = \sum^{d}_{i=1} e_i \otimes \left( \sum_{j=1}^{\infty}c_{ij}f_{j} \right) = \left[ \begin{array}{c}
	{\psi_1} \\
	\vdots \\
	{\psi_{d}} 
	\end{array} \right],
	\label{lbl:psi:decomposition:RG:MS:2018}
\end{equation}
where ${\psi_i} = \sum_{j=1}^{\infty} c_{ij} f_j$.

The vectors ${\psi_i}$ in general are not forming orthonormal systems of vectors.  From the orthonormality of systems $(e_i) \in cons(\mathbb{C}^d)$ the following normalisation condition follows 
\begin{equation}
{|| \Psi ||}^2_{\mathbb{C}^d \otimes \lH} = \sum_{i=1}^{d} {|| {\psi_i} ||}^2_{\lH} .
\end{equation}

For any $\Psi \in \mathbb{C}^d \otimes \lH$, the following,  finite Schmidt decomposition 
can be derived by means of the well known SVD arguments:
\begin{equation}
\Psi = \sum_{i=1}^{d} \sqrt{\lambda_i} \hat{e_i} \otimes \hat{f_i}
\label{lbl:eq:schmidt:RG:MS:2019}
\end{equation}
where $\lambda_i \geq 0$ for $i=1:d$, $(\hat{e_i}) \in cons(\mathbb{C}^d), (\hat{f_i}) \in ons(\lH)$.

All the information about quantum entanglement contained in $\Psi$ can be read off from the Schmidt decomposition~Eq.~(\ref{lbl:eq:schmidt:RG:MS:2019}). In particular:
\begin{itemize}
\item[(1)] The amount of entanglement (that will be computed by the amount of von Neumann entropy of the corresponding reduced density matrix which is the same as the  amount of entropy of the introduced below Gram matrix):
\begin{equation}
\mathrm{Ent}( \Psi ) = - \sum_{i=1}^{d} \lambda_i \log( \lambda_i ),
\label{lbl:eq:Ent:for:Psi}
\end{equation}
where $0 \cdot \log( 0 ) = 0$ convention has been used in formula~(\ref{lbl:eq:Ent:for:Psi}).
\item[(2)] The maximally entangled, pure states of the $(d,\infty)$ quantum systems are those for which for $i=1:d$, $\lambda_i=\frac{1}{d}$ and then $Ent(\Psi) = \log(d)$.
\end{itemize}

The problem of computing the Schmidt decomposition~Eq.~(\ref{lbl:eq:schmidt:RG:MS:2019}), see \cite{Gielerak2017} for example of a given $\Psi \in \mathbb{C}^{d} \otimes \lH$, is not easy in general \cite{You2015}. Here we propose another approach, to the computation of the Schmidt's numbers of a given $\Psi \in \mathbb{C}^{d} \otimes \lH$.

Let $\Psi \in \mathbb{C}^{d} \otimes \lH$ be given as in~Eq.~(\ref{lbl:psi:decomposition:RG:MS:2018}) i.e.:
\begin{equation}
\Psi = \sum_{i=1}^{d} e_i \otimes \psi_i .
\label{lbl:eq:Psi:form}
\end{equation}

Define the map $j_{\Psi} : \mathbb{C}^d \rightarrow \lH$ by $j_\Psi(e_i) = \psi_i$ and then extend this by linearity to the whole space $\mathbb{C}^d$, i.e.:
\begin{equation}
j_{\Psi} \left( \sum_{i=1}^{d} d_i e_i \right) = \sum_{j=1}^{d} d_i \psi_i .
\end{equation}

The map $j_{\Psi}$ is continuous and $|| j_{\Psi} || = || \Psi ||$, where $|| \cdot ||$  stands for the corresponding operator, respectively vector norm. From the very definition of $j_{\Psi}$ one can compute the operator:
\begin{equation}
G(\Psi) : j^{\dagger}_{\Psi} \cdot j_{\Psi} : \mathbb{C}^d \longrightarrow \mathbb{C}^d ,
\end{equation}
as a $d \times d$ matrix (which is the matrix representation of $G$ in the canonical basis of $\mathbb{C}^d$) with the corresponding matrix elements:
\begin{equation}
G( \Psi )_{ij} = \mbraket{\psi_i}{\psi_j} \;\;\; \mathrm{for} \;\;\; i,j=1:d.
\label{lbl:G:of:Psi}
\end{equation}

\begin{definition}
For a given finite set $\Omega = \{\omega_1, \omega_2, \ldots, \omega_n\} \subseteq \lH$ the Gram matrix of $\Omega$ is defined as $G(\Omega)_{ij} = \mbraket{\omega_i}{\omega_j}_{i,j = 1:n}$.
\end{definition}

Some elementary properties of the Gram matrix are listed below:
\begin{itemize}
\item[G(1)] The Gram matrix $G(\Omega)$ is hermitean and non-negative, i.e.:
\begin{equation}
\mathrm{for} \; i,j=1 : n : \overline{G_{ij}} = G_{ij}
\end{equation}
and, for any sequence of complex numbers $c_{i,j} \;  i=1:n$.
\begin{equation}
\sum_{i,j=1}^{n} c_{i} \overline{c_j}  G_{ij} \geq 0 .
\end{equation}
\item[G(2)] The rank of $G(\Omega)$ is equal to the dimension of the subspace $lh(\omega_1, \omega_2, \ldots, \omega_n) \subseteq \lH$, i.e.:
\begin{equation}
\mathrm{rank} \; G(\Omega) = \dim \; lh(\omega_1, \omega_2, \ldots, \omega_n) .
\end{equation}
\item[G(3)] If $G(\Omega)$ is a Gram matrix, then for all $i,j$:
\begin{equation}
| G_{ij}(\Omega) | \leq G_{ii}(\Omega) \cdot G_{jj}(\Omega) .
\end{equation}
\begin{proof}
By direct use of a Cauchy-Schwartz inequality.
\end{proof}
\item[G(4)] the Gram matrix $G(\Omega)$ is invertible iff  $\mathrm{rank} \; G(\Omega)=n$.
\item[G(5)] If a given $d \times d$ matrix $G$ is positive semi-definite then there exists system of vectors $(g_1, \ldots, g_d) \in \mathbb{C}^d$ such that $G$ is the Gram matrix of the system $(g_1, \ldots, g_d)$.
\begin{proof}
To compute the Cholesky decomposition of $G$, i.e. compute the matrix $B$ (lower-triangular or equivalently  upper-triangular) such that
\begin{equation}
G=BB^{\dagger} .
\end{equation} 
Then  the  corresponding vectors $g$ are given as  rows  of  $B$. $\square$
\end{proof}
\item[G(6)] Let $\nU(\lH)$ stands for the multiplicative group of unitary operators acting in~$\lH$. Then for any $U \in \nU(\lH)$, any $\Psi = \sum_{i=1}^d e_i \otimes \psi_i$:
\begin{equation}
G(\Psi) = G(( \mathbb{I} \otimes U)(\Psi)),
\end{equation}
i.e. $G$ is $U$-invariant.
\item[G(7)] Let $\sigma(\Psi) = (\lambda_1, \ldots, \lambda_d)$ be a spectrum of the matrix $G(\Psi)$ (ordered in  non-increasing  order) then $\lambda_i(\Psi) = (\lambda^{s}_{i}(\Psi))^2$ where $\lambda^{s}_{i}(\Psi)$ are the Schmidt coefficients of the vector $\Psi \in \mathbb{C}^d \otimes \lH$.
\item[G(8)] Let $\mathrm{Ent}(\Psi) = - \mtr{ G(\Psi) \log G(\Psi) }$.  Then $\sup_{\Psi} \mathrm{Ent}(\Psi)  = \mathrm{Ent}(\Psi^{\star}) = \log(d)$  where  $\Psi^{\star} \in \bC^d \otimes \lH$ is such that $\lambda_i = \frac{1}{d}$ for all $i=1:d$.
\end{itemize}

For the purposes of the present note $G(7)$ and $G(8)$  are the most important properties of the Gram matrix $G(\Psi)$. We conclude from these properties that in order to compute entropy of entanglement included in $\Psi$ we need to compute the corresponding Gram matrix of $\Psi$, matrix elements of it  are given as scalar products of the corresponding components $\psi_i$ of $\Psi$.

\begin{remark}
The scalar products are the simplest, widely used measures of similarity in between pure quantum states and are well known in quantum information theory as fidelity measure \cite{Nielsen2000} 
The Gram matrices are widely used in several areas of research. The differential geometry, mathematical statistics problems, quantum chemistry and atomic physics, control theory, machine learning and deep learning problems are some examples where the technique based on Gram matrieces are applied. See for a source of references on this \cite{Hazewinkel2001}.

However their use in Quantum Information Theory (QIT) seems to be underestimated as it is hard to find any reference in QIT in which an explicite use of Gram matrices (although they are very close to the corresponding reduced density matrices notion). For a more extensive review together with some (might be) new results obtained with the use of Gram matrices, see \cite{Gielerak2019a}, \cite{Christensen2003}.

However,  we  are  not  able  to  find  any  source  for  the  use  of  this  concept  in  the  present  context.
\end{remark}

Although we know the answer to the question which states $\Psi \in \mathbb{C}^d \otimes \lH$ are maximally entangled the problem how to read off the amount of entropy directly from the formula~(\ref{lbl:psi:decomposition:RG:MS:2018}) is the main topic of this contribution. The main result, confirmed numerically in the next section for the particular cases $d=2$ and $d=4$,  is the following hypothesis (for any pure state  $\Psi$  written in the form given by Eq.~(\ref{lbl:eq:Psi:form}) we define its Gram  matrix as the Gram matrix formed by the frame $(\psi_1, \psi_2, \ldots, \psi_d)$).

\begin{conjecture}
Let $\Psi = \sum_{i=1}^{d} e_i \otimes \psi_i \in \mathbb{C}^d \otimes \nH$. Then, $\Psi$ is maximally entangled pure state of $(d, \infty)$ quantum system under consideration iff:
\begin{equation}
{G(\Psi)}_{ij} = \frac{1}{d} \cdot \delta_{ij} \;\;\; \mathrm{for} \;\;\; i,j=1:d.
\end{equation}
\end{conjecture}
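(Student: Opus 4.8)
The plan is to reduce the claimed equivalence to the spectral dictionary recorded in property $G(7)$ — which identifies the spectrum $\sigma(G(\Psi)) = (\lambda_1, \ldots, \lambda_d)$ with the collection of Schmidt eigenvalues of $\Psi$ — together with the definition in item~(2), namely that $\Psi$ is maximally entangled precisely when $\lambda_i = \frac{1}{d}$ for every $i=1:d$. Once this correspondence is in place, both implications collapse to short statements about the $d \times d$ Hermitian matrix $G(\Psi)$, and no infinite-dimensional analysis of $\lH$ is needed.

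First I would dispose of the easy direction ($\Leftarrow$). If $G(\Psi)_{ij} = \frac{1}{d}\delta_{ij}$, then $G(\Psi) = \frac{1}{d}\mathbb{I}_d$, whose spectrum is the single value $\frac{1}{d}$ with multiplicity $d$. By $G(7)$ every Schmidt eigenvalue then equals $\frac{1}{d}$, so by item~(2) the state $\Psi$ is maximally entangled; equivalently one computes $\mathrm{Ent}(\Psi) = -\mtr{G(\Psi)\log G(\Psi)} = -\sum_{i=1}^{d}\frac{1}{d}\log\frac{1}{d} = \log(d)$, which by $G(8)$ is the supremal value attainable.

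For the converse ($\Rightarrow$), suppose $\Psi$ is maximally entangled. By item~(2) and $G(7)$ the spectrum of $G(\Psi)$ is $(\frac{1}{d}, \ldots, \frac{1}{d})$. Since $G(\Psi)$ is Hermitian by $G(1)$, the spectral theorem furnishes a unitary $V$ with $G(\Psi) = V\,\mathrm{diag}(\tfrac{1}{d},\ldots,\tfrac{1}{d})\,V^{\dagger} = \frac{1}{d}\,V V^{\dagger} = \frac{1}{d}\mathbb{I}_d$, that is $G(\Psi)_{ij} = \frac{1}{d}\delta_{ij}$. The point exploited here is that a single eigenvalue of full multiplicity forces a normal matrix to be a scalar multiple of the identity in \emph{every} orthonormal basis, so the conclusion is basis-independent and immediate.

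The genuinely substantive ingredient is therefore not the deduction above but property $G(7)$ itself, which I am permitted to assume. Were it to be established from scratch, the main obstacle would be to identify the spectrum of $G(\Psi)$ with that of the reduced density operator $\rho = \mptr{\mket{\Psi}\mbra{\Psi}}{\lH}$: a direct computation gives $\rho_{ij} = \mbraket{\psi_j}{\psi_i} = \overline{G(\Psi)_{ij}}$, whence $\rho = \overline{G(\Psi)} = G(\Psi)^{\mathrm{T}}$ shares the spectrum of $G(\Psi)$, while the eigenvalues of $\rho$ are by construction the Schmidt eigenvalues $\lambda_i$ of Eq.~(\ref{lbl:eq:schmidt:RG:MS:2019}). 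This trace-out identity, rather than any delicate argument about the infinite-dimensional factor, is the crux; the finiteness of the $\bC^d$ part confines the entire analysis to $d \times d$ matrix algebra.
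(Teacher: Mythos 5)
Your proposal is correct and takes essentially the approach the paper itself indicates: the Remark following the conjecture reduces the ``iff'' to the Schmidt decomposition of Eq.~(\ref{lbl:eq:schmidt:RG:MS:2019}), which is precisely the content of property $G(7)$ that you invoke, combined with the spectral definition of maximal entanglement in item~(2). The only difference is that you actually carry out the remaining linear algebra --- a Hermitian matrix whose spectrum is $(\frac{1}{d},\ldots,\frac{1}{d})$ with full multiplicity must equal $\frac{1}{d}\mathbb{I}_d$, and conversely --- whereas the paper leaves this as a one-line remark deferring to an external reference.
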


\begin{remark}
The statement "iff" from conjecture follows simply from the Schmidt decomposition Eq.~(\ref{lbl:eq:schmidt:RG:MS:2019}) and the results in \cite{Gielerak2018c}.
\end{remark}

\section{Numerical studies for $(2,\infty)$ and $(4, \infty)$ systems} \label{lbl:sec:title:RG:MS:CN:2019}

Let $G(\Psi)$ be a $d \times d$ Gram matrix for a given pure state $\Psi \in \bC^d \otimes \lH$ as given in formula (\ref{lbl:G:of:Psi}). The special Gram matrix $G_{max}$, defined as $(G_{max})_{ij}=\frac{1}{d} \cdot \delta_{ij}$ corresponds to case $\mbraket{\psi_i}{\psi_j} =\frac{1}{d}  \delta_{ij}$ for which the entropy of the Gram matrix attains maximal, possible values.

For this reason we introduce the following definition:

\begin{definition}
Deviation from the maximal entangled states manifold of a given state $\Psi \in \bC^d \otimes \lH$ is given as
\begin{equation}
\mathrm{dev}(\Psi) = \left( \sum_{1 \leq i \leq j \leq d} {|G(\Psi)_{ij} - (G_{max})_{ij}|}^2\right)^{\frac{1}{2}} .
\end{equation}
\end{definition}

Now we can state precisely our hypothesis:
\begin{equation}
\mathrm{If} \; \mathrm{dev}(\Psi) > 0 \; \mathrm{then} \; \mathrm{Ent}(\Psi) < \log(d).
\end{equation}

At the moment, no complete rigorous proof of this hypothesis is available. However, some computer assisted numerical analysis in the particularly interesting (from the point of view of quantum engineering) cases of dimensions $d=2$ (single qubit entangled with atom for example) and $d=4$ (relativistic $\frac{1}{2}$-spin degrees of freedom entangled with the orbital degrees of freedom in the relativistic, Dirac model of atom for example) are presented in the rest of this section.

\begin{example} The case of $d=2$. \\
Let
\begin{equation}
\Psi = e_1 \otimes f_1 + e_2 \otimes f_2 \in \mathbb{C}^2 \otimes \lH .
\end{equation}
The Gram matrix:
\begin{equation}
G(\Psi) = \left[ \begin{array}{cc}
{|| f_1 ||}^2 & \mbraket{f_1}{f_2} \\
\mbraket{f_2}{f_1} & {|| f_2 ||}^2
\end{array} \right],
\end{equation}
has the following structure (which is exactly of the form of reduced density matrix obtained  from  $\Psi$ by tracing out the degrees of freedom connected with the infinite dimensional subsystem):
\begin{equation}
G(\Psi) = \left[ \begin{array}{cc}
p & \overline{\sigma} \\
\sigma & (1 - p)
\end{array} \right],
\label{lbl:eq:Gram:Psi:for:Case:2}
\end{equation}
where $p={|| f_1 ||}^2 \in [0,1]$, $\sigma=\mbraket{f_1}{f_2}_{\lH}$ is a complex number such that $|\sigma| \leq \frac{1}{2}$ as it follows from the Cauchy-Schwartz inequality.

Solving the corresponding eigenvalue equation for $G(\Psi)$ the following eigenvalues of $G(\Psi)$ are obtained
\begin{equation}
\lambda_1 = \frac{1-\sqrt{\Delta}}{2}, \lambda_2 = \frac{1+\sqrt{\Delta}}{2}, \; where \; \Delta = 1-4(p-p^2-|\sigma|^2) .
\end{equation}

The corresponding entropic measure of quantum entanglement  contained in the state $\Psi$:
\begin{equation}
\mathrm{Ent}(\Psi) = - \lambda_1 \log(\lambda_1) - \lambda_2 \log(\lambda_2), 
\end{equation}
attains its maximal value for $\lambda_1 = \lambda_2 = \frac{1}{2}$ which corresponds to the case when $|\Delta| = 0$. In particular case $p=\frac{1}{2}$ and $|\sigma|=0$ the quantum state $\Psi$ is maximally entangled.
\end{example}

\begin{figure}
\begin{center}
\begin{tabular}{c}
(A) \\
\includegraphics[width=12cm]{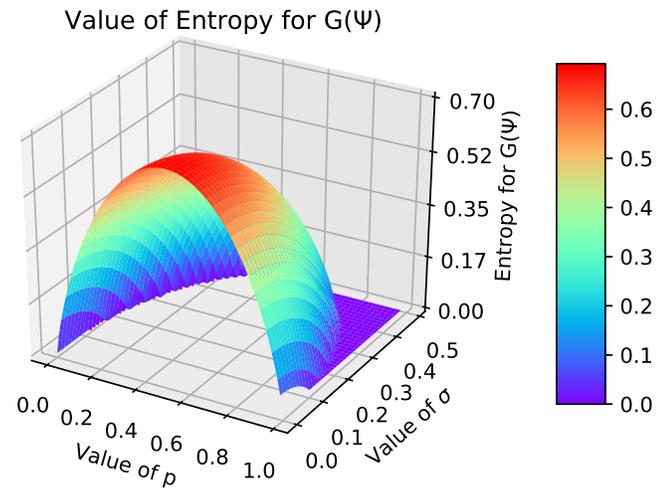} \\
(B) \\
\includegraphics[width=9cm]{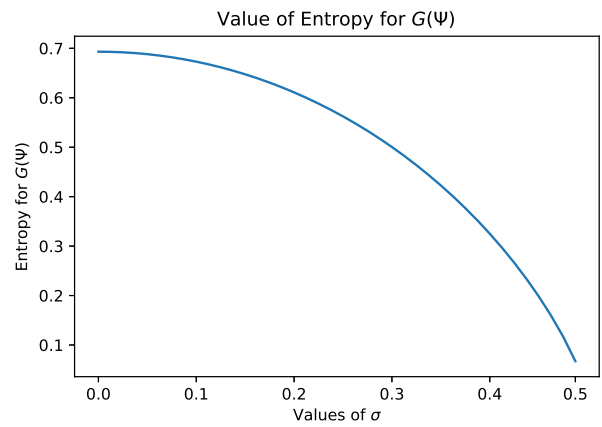} 
\end{tabular}
\end{center}
\caption{Value of Entropy measure $\mathrm{Ent}(\Psi)$ for case $d=2$. The parameters $p$ and $\sigma$ in plots (A) and (B) are defined as in Gram matrix $G(\Psi)$ denoted by Eq.~(\ref{lbl:eq:Gram:Psi:for:Case:2}). Plots (B) shows change of the Entropy measure value for $p=\frac{1}{2}$}
\end{figure}

\begin{example}
The case of $d=4$. \\
General form of the pure states in $d=4$ (corresponding to several physically interesting cases like photonic wave  functions for example) is given by
\begin{equation}
\Psi = e_1 \otimes f_1 + e_2 \otimes f_2 + e_3 \otimes f_3 + e_4 \otimes f_4,
\label{eq:Psi4:system}
\end{equation}
where $e_i \in \bC^4$ and $(e_i)_j = \delta_{ij}$, $f_i \in \lH$, $i=1:4$.
\end{example}

The corresponding to Eq.~(\ref{eq:Psi4:system}) Gram matrix $G(\Psi)$ is given by the formulae
\begin{equation}
G(\Psi)_{ij} = \mbraket{f_i}{f_j}_{\lH} 
\label{lbl:eq:G4:mat}.
\end{equation}
In  the  case  of  semi-definite  positive Gram matrix $G ( \Psi )$ the  well known  Sylvester  criterion, valid in the  case  of  strictly  positive matrices must  be  generalised. It is  necessary to  check in this  case that  all principal minors of $G$  are  non-negative \cite{Prussing1986}, \cite{Gilbert1991} not  only  the  leading  one. This  means  that in order to  describe  the  corresponding manifold  of  Gram matrices  containing  also the  matrices  with  non-trivial  kernels , $\dim( \mathrm{Ker}( G) ) > 0$  it  is necessary  to  put  $2^d-1$  non-linear  constrains of  the  form  as listed  in  Lemma~\ref{lbl:lemma:one:CN:2019:RG:MS}  below in the  case  of  strictly  positive Gram  matrices. In particular  dimension  $d=4$  this  number is  equal  to  15.This  is  the main  reason  that we  restrict  our  consideration  below  to  the  case  of  strictly positive  Gram  matrices only  which  are  most  interesting  from  the  physical point  of  view.

\begin{lemma}
Let $G_4$ be the set of $4 \times 4$ hermitian matrices $G \in \bC^2 \otimes \bC^2$ obeying the following constrains:
\begin{itemize}
\item [$G_4(1)$:] $\sum_{i=1}^{4} G_{ii} = 1$; $G_{ii} \in [0, 1]$,
\item [$G_4(2)$:] \begin{itemize}
\item[(i)] $\det(G) \geq 0$,
\item[(ii)] $\det\left( \left[\begin{array}{ccc} G_{11} & \ldots & G_{13} \\
\vdots & \ddots  & \vdots \\
G_{31} & \ldots & G_{33} 
 \end{array} \right] \right)  \geq 0$,
\item[(iii)] $\det\left( \left[\begin{array}{cc} G_{11} & G_{12} \\
G_{21} & G_{22} 
 \end{array} \right] \right)  \geq 0$.
\end{itemize}
\end{itemize}
Then to any $G \in G_{4}$ there exists a system of vectors $g_i \in \bC^4$ such that $\mbraket{g_i}{g_j}_{\bC^4} = G_{ij}$.
\label{lbl:lemma:one:CN:2019:RG:MS}
\end{lemma}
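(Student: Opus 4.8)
The plan is to derive the conclusion directly from property $G(5)$ established above, once we have verified that the constraints $G_4(1)$ and $G_4(2)$ force $G$ to be positive semi-definite. Since $G(5)$ guarantees, for any positive semi-definite $d \times d$ matrix, a system of vectors in $\bC^d$ whose Gram matrix is exactly $G$ (obtained as the rows of the Cholesky factor $B$ in $G = BB^{\dagger}$), the entire content of the Lemma reduces to the single implication ``constraints $\Rightarrow$ positive semi-definiteness'', after which the desired $g_i$ are read off for free.

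First I would identify the four quantities whose non-negativity is being asserted. The hypothesis $G_4(1)$ supplies $G_{11} \geq 0$ (the $1 \times 1$ leading minor) through $G_{ii} \in [0,1]$, while $G_4(2)$(iii), (ii), (i) are precisely the $2 \times 2$, $3 \times 3$ and full $4 \times 4$ leading principal minors of $G$. Thus, modulo the normalisation $\mathrm{Tr}(G) = 1$, the constraints are exactly the statement that all \emph{leading} principal minors of the Hermitian matrix $G$ are non-negative. The normalisation itself plays no role in the existence of the $g_i$; it only guarantees that the resulting $\Psi = \sum_i e_i \otimes g_i$ is a unit vector of the correct type.

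Next I would invoke Sylvester's criterion. In the strictly positive regime to which the surrounding discussion restricts attention, all leading principal minors are strictly positive, and Sylvester's criterion then yields at once that $G$ is positive definite, hence positive semi-definite. Feeding this into $G(5)$ produces the Cholesky factorisation $G = BB^{\dagger}$, and taking $g_i$ to be the $i$-th row of $B$ gives $\mbraket{g_i}{g_j}_{\bC^4} = G_{ij}$ for $i,j = 1:4$, completing the argument.

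The main obstacle is the well-known gap between the two forms of Sylvester's criterion: non-negativity of the leading principal minors characterises positive \emph{definiteness} only in its strict form, and does \emph{not} by itself imply positive semi-definiteness on the boundary (the standard counterexample $\mathrm{diag}(0,-1)$ has non-negative leading minors yet a negative eigenvalue). A fully general statement allowing $\dim(\mathrm{Ker}(G)) > 0$ would, as noted in the paragraph preceding the Lemma, require all $2^4 - 1 = 15$ principal minors to be non-negative rather than only the four leading ones. I would therefore either state the Lemma for strictly positive $G$, where the leading-minor conditions genuinely suffice, or else recover the semi-definite boundary by a continuity argument: approximate $G$ by the strictly positive matrices $G_{\varepsilon} = G + \varepsilon \mathbb{I}$, apply the strict case to each $G_{\varepsilon}$, and pass to the limit $\varepsilon \to 0^{+}$ in the Cholesky factors, whose entries depend continuously on $G_{\varepsilon}$.
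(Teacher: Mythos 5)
Your route---read the hypotheses as Sylvester-type leading-minor conditions, conclude positive semi-definiteness, and then hand the matrix to property $G(5)$ to extract the $g_i$ as rows of the Cholesky factor---is exactly the paper's proof, which says only that $G_4(1)$ and $G_4(2)$ ``give the normalisation and non-negativity of $G$'' and then invokes $G(5)$. The added value of your write-up is that you correctly identify the gap this one-liner glosses over: non-negativity of the \emph{leading} principal minors alone does not imply positive semi-definiteness, and the Lemma's hypotheses, as literally stated with $\geq 0$, do not exclude the degenerate case. In fact the Lemma is false as written. Take
\begin{equation*}
G \;=\; \left[\begin{array}{cccc} 0 & 0 & 0 & 0 \\ 0 & \tfrac12 & 0 & a \\ 0 & 0 & \tfrac12 & 0 \\ 0 & \bar{a} & 0 & 0 \end{array}\right], \qquad a \neq 0 .
\end{equation*}
This matrix is Hermitian, has trace $1$, diagonal entries in $[0,1]$, and all four leading principal minors equal to $0$, so it satisfies $G_4(1)$ and $G_4(2)$; yet the principal $2\times 2$ submatrix on rows and columns $\{2,4\}$ has determinant $-|a|^2 < 0$, so $G$ is not positive semi-definite and no system of vectors with $\mbraket{g_i}{g_j} = G_{ij}$ can exist. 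The Lemma must therefore be read, as the paragraph preceding it signals, either under the restriction to strictly positive $G$ (genuine Sylvester, your first repair, which is correct) or with all $2^4-1 = 15$ principal minors required non-negative.

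Your second proposed repair, the continuity argument with $G_{\varepsilon} = G + \varepsilon\,\mathbb{I}$, does not close this gap: $G_{\varepsilon}$ is positive definite for all small $\varepsilon > 0$ precisely when $G$ is already positive semi-definite, which is the very point at issue; for the matrix above, $G_{\varepsilon}$ retains a negative eigenvalue for all sufficiently small $\varepsilon$, so the ``strict case'' can never be applied to it. The perturbation trick is only useful for the separate and unproblematic step of extending the Cholesky factorisation from positive definite to positive semi-definite matrices (i.e.\ inside the proof of $G(5)$ itself), not for upgrading the leading-minor hypothesis to semi-definiteness.
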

\begin{proof}
Essentially $G(1)$ together with $G(2)$ property give the normalisation and non-negativity of $G$. Thus applying property $G(5)$ of the present note we conclude the final statement of Lemma~\ref{lbl:lemma:one:CN:2019:RG:MS}. $\square$
\end{proof}


Our hypothesis is the following.

\begin{conjecture}
Let $G \in G_4$. Let us define the entropy functional on $G_4$:
\begin{equation}
\begin{array}{l}
\mathrm{Ent}: G_4 \longrightarrow [\sigma, \log(4) ], \\
G \longrightarrow \mathrm{Ent}(G) = - \mtr{G \log(G) } .
\end{array}
\end{equation}
Then $\sup_{G \in G_4} \mathrm{Ent}(G) = \log(4)$ and $\mathrm{Ent}(G) = \log(4)$ \myiff $G_{ij}=0$ for $i \neq j$ and $G_{ii}=1/4$ for $i=1:4$.
\end{conjecture}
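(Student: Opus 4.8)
The plan is to observe that, by conditions $G_4(1)$ and $G_4(2)$ together with Lemma~\ref{lbl:lemma:one:CN:2019:RG:MS}, every $G \in G_4$ is a $4 \times 4$ Hermitian, positive semi-definite matrix of unit trace, i.e. a density matrix on $\bC^4$. The assertion is then exactly the classical fact that the von Neumann entropy of a $d$-dimensional density matrix is bounded by $\log(d)$, with equality only at the maximally mixed state. First I would diagonalise $G = U \Lambda U^{\dagger}$ with $\Lambda = \mathrm{diag}(\lambda_1, \ldots, \lambda_4)$; since $\mathrm{Tr}$ is invariant under unitary conjugation and the functional calculus gives $G \log(G) = U (\Lambda \log(\Lambda)) U^{\dagger}$, the entropy collapses to the Shannon entropy of the spectrum,
\begin{equation}
\mathrm{Ent}(G) = - \sum_{i=1}^{4} \lambda_i \log(\lambda_i),
\end{equation}
where $\lambda_i \geq 0$ and $\sum_i \lambda_i = 1$ by $G_4(1)$ and $G_4(2)$, and where the convention $0 \cdot \log(0) = 0$ is used exactly as in Eq.~(\ref{lbl:eq:Ent:for:Psi}).

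The core estimate is a short concavity argument. The map $t \mapsto -t\log(t)$ is strictly concave on $[0,1]$, so Jensen's inequality with equal weights $\frac{1}{4}$ gives $-\frac{1}{4}\sum_i \lambda_i \log(\lambda_i) \leq -\overline{\lambda}\log(\overline{\lambda})$ with $\overline{\lambda} = \frac{1}{4}\sum_i \lambda_i = \frac{1}{4}$, hence
\begin{equation}
\mathrm{Ent}(G) = -\sum_{i=1}^{4} \lambda_i \log(\lambda_i) \leq \log(4),
\end{equation}
and strict concavity forces equality precisely when $\lambda_1 = \lambda_2 = \lambda_3 = \lambda_4 = \frac{1}{4}$ (equivalently one may invoke Gibbs' inequality $\mtr{G\log(G)} \geq \mtr{G\log(\frac{1}{4}\mathbb{I})} = -\log(4)$). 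To upgrade the bound from a supremum to an attained maximum I would check that $\frac{1}{4}\mathbb{I} \in G_4$: its diagonal entries are $\frac{1}{4}$ and all its principal minors are strictly positive, so $\sup_{G \in G_4}\mathrm{Ent}(G) = \log(4)$ is genuinely realised inside $G_4$.

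It then remains to translate the spectral equality condition into the stated condition on matrix entries. A Hermitian matrix all of whose eigenvalues equal $\frac{1}{4}$ satisfies $G = U(\frac{1}{4}\mathbb{I})U^{\dagger} = \frac{1}{4}\mathbb{I}$ for any diagonalising unitary $U$; hence the uniform-spectrum condition is equivalent to $G_{ij} = \frac{1}{4}\delta_{ij}$, which is exactly the claimed characterisation $G_{ij}=0$ for $i\neq j$ and $G_{ii}=\frac{1}{4}$. Combining this equivalence with the equality case of Jensen's inequality closes the \emph{iff}.

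I expect the only genuinely delicate point to be the verification that membership in $G_4$ really does guarantee positive semi-definiteness, so that the eigenvalue reduction is unconditional. As already flagged before Lemma~\ref{lbl:lemma:one:CN:2019:RG:MS}, non-negativity of the \emph{leading} principal minors (Sylvester's criterion) characterises strict positivity but \emph{not} semi-definiteness, for which all $2^d-1$ principal minors must be controlled; restricting, as is done here, to the strictly positive Gram matrices removes this subtlety entirely. A secondary technical care is the continuity of $\mathrm{Ent}$ up to the boundary of the spectral simplex under the $0 \cdot \log(0) = 0$ convention, which is what guarantees that the compactness-plus-continuity attainment argument remains valid even where some $\lambda_i$ vanish.
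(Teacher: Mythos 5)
Your proof is correct, but it is worth being explicit that the paper does not actually prove this statement at all: it is presented as a conjecture, the authors state that ``no complete rigorous proof of this hypothesis is available,'' and the only support offered is numerical experimentation for $d=2$ and $d=4$ (the plots of $\mathrm{Ent}$ over parametrised families of Gram matrices). What you supply --- diagonalisation, reduction of $-\mathrm{Tr}(G\log G)$ to the Shannon entropy of the spectrum, the Jensen/strict-concavity bound $-\sum_i\lambda_i\log\lambda_i\le\log 4$ with equality forcing $\lambda_i\equiv\tfrac14$, and the observation that a Hermitian matrix with constant spectrum $\tfrac14$ is $\tfrac14\mathbb{I}$ --- is the standard maximum-entropy argument for density matrices, and it settles the conjecture analytically rather than numerically. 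The one point you rightly flag deserves emphasis, because it is a genuine defect in the paper's definition rather than in your argument: $G_4(2)$ constrains only the \emph{leading} principal minors, and non-negativity of these does not imply positive semi-definiteness (e.g.\ a trace-one Hermitian matrix whose only nonzero off-diagonal entries form a block $\bigl[\begin{smallmatrix}0&b\\ \bar b&0\end{smallmatrix}\bigr]$ sitting below a zero diagonal entry passes all the leading-minor tests yet has a negative eigenvalue, on which $\mathrm{Ent}$ is undefined). Your spectral reduction is therefore unconditional only after restricting to genuinely positive semi-definite elements of $G_4$, which is exactly the restriction the paper announces informally before Lemma~\ref{lbl:lemma:one:CN:2019:RG:MS}; with that reading, your proof is complete and strictly stronger than what the paper provides.
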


\begin{figure}
\begin{center}
\includegraphics[width=12cm]{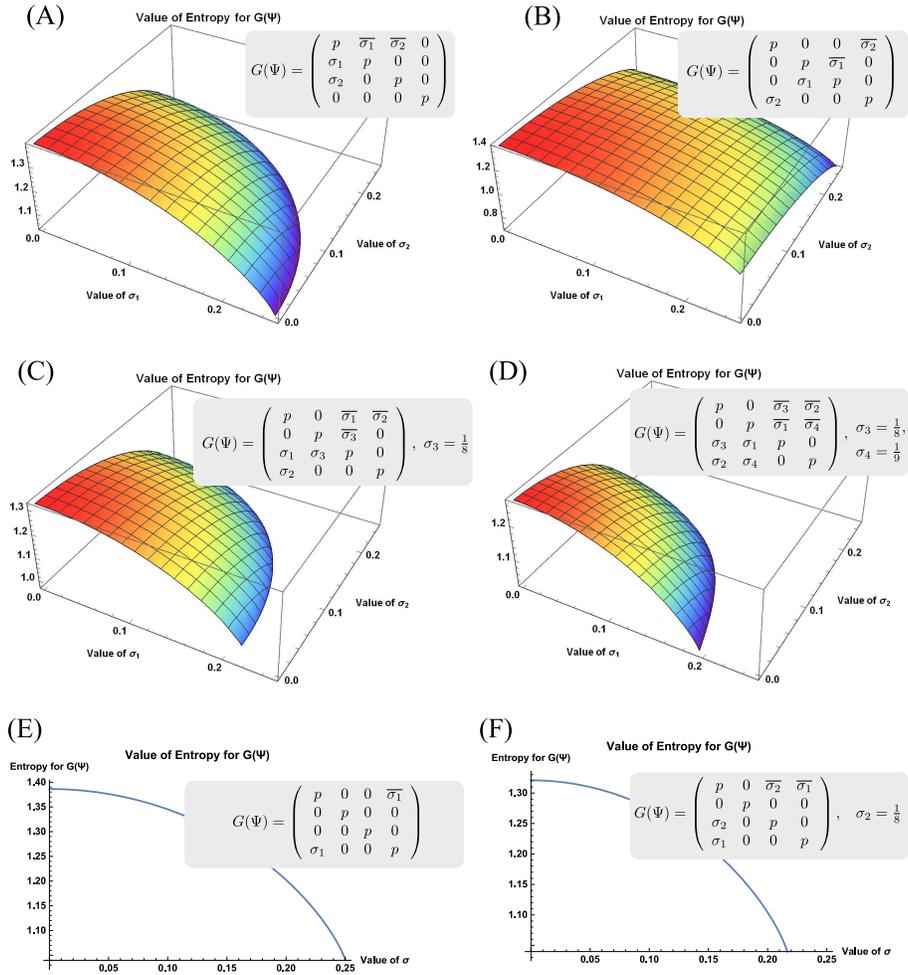}
\end{center}
\caption{Value of Entropy measure $\mathrm{Ent}(\Psi)$ for case $d=4$. The parameters $p=1/4$, and $\sigma_1$, $\sigma_2$, $\sigma_3$ represent elements of matrix $G(\Psi)$ denoted by Eq.~(\ref{lbl:eq:G4:mat}) in four form (A), (B), (C), (D) of $G(\Psi)$ matrix. Additionally, we also show two cases (E) and (F) where in (E) only one parameter $\sigma_1$ is used, and the second (F) where value $\sigma_2$ is constant}
\end{figure}


\section{Conclusions} \label{lbl:sec:conclusions:RG:MS:CN:2019}

A new method for entanglement analysis, based on the notion of unitary invariant Gram matrices connected with pure states of a general $(d,\infty)$ quantum systems has  been proposed. In particular, a new criterion for detecting the maximally possible amount of quantum entanglement present in an analysed
quantum state has been formulated in terms of the corresponding Gram matrix. Certain, numerical  analysis type  arguments in favour of the general validity of the  criterion formulated are being presented in a particular dimensions $d$, but very interesting cases of dimensions  $d=2$ and  $d=4$. The presented connection between  quantum entanglement and Gram matrices leads to a new, purely geometrical characterisation of entanglement in terms of a general and well known geometric interpretations of gramians.


\label{lbl:references:MS:JW:CN2019}

\end{document}